\newcommand{\ud}{\mathrm{d}}
\newcommand{\cD}{{\mathcal D}}
\newcommand{\rz}{{\mathbb R}}
\newcommand{\nz}{{\mathbb N}}
\DeclareMathOperator{\supp}{supp}
\newcommand{\eins}{\boldsymbol{1}}
\newtheorem{theorem}{Theorem}[section]
\newtheorem{lemma}[theorem]{Lemma}
\newtheorem{cor}[theorem]{Corollary}
\theoremstyle{definition}
\newtheorem{rem}[theorem]{Remark}
\begin{document}

\title{Two interacting particles on the half-line}

\author{Joachim Kerner }
\email[]{joachim.kerner@fernuni-hagen.de}
\affiliation{Department of Mathematics and Computer Science, FernUniversit\"{a}t in Hagen, 58084 Hagen, Germany}

\author{Tobias M\"{u}hlenbruch}
\email[]{tobias.muehlenbruch@fernuni-hagen.de}
\affiliation{Department of Mathematics and Computer Science, FernUniversit\"{a}t in Hagen, 58084 Hagen, Germany}

\date{\today}

\begin{abstract} In the case of general compact quantum graphs, many-particle models with singular two-particle interactions where introduced in Refs.~\citenum{BKSingular,BKContact} in order to provide a paradigm for further studies on many-particle quantum chaos. In this note, we discuss various aspects of such singular interactions in a two-particle system restricted to the half-line $\rz_+$. Among others, we give a description of the spectrum of the two-particle Hamiltonian and obtain upper bounds on the number of eigenstates below the essential spectrum. We also specify conditions under which there is exactly one such eigenstate. As a final result, it is shown that the ground state is unique and decays exponentially as $\sqrt{x^2+y^2} \to \infty$.
\end{abstract}

\pacs{03.65.Db, 73.21.Hb, 05.45.Ac}


\maketitle 

\section{Introduction}
Quantum graphs have proven to be useful models in various areas of mathematical and theoretical physics. From a mathematical point of view, they combine the simplicity of a (quasi) one-dimensional system with the complexity of a graph-like structure. It is exactly this underlying complexity that turned quantum graphs into particular important models in the field of quantum chaos by showing that eigenvalue correlations in quantum graphs are generically the same as in quantum systems with chaotic classical limit~\cite{KS97,GNUSMY06}. Intuitively, the chaotic behavior stems from the scattering of the quantum particle in the vertices of the graph. Therefore, in order to provide a useful model for the investigation of many-particle quantum chaos, many-particle systems on (finite, compact) quantum graphs with localised many-particle interactions were introduced in Refs.~\citenum{BKSingular} and \citenum{BKContact}. Localised in this context means that the particles interact only in the vicinity of the vertices of the graph, i.e., one particle has to sit at a vertex whereas the other particles are close to it. The important consequence of such singular many-particle interactions is that the scattering of the particles and hence the dynamics of the system are affected in a way encoding genuine many-particle correlations.

Also, apart from the fact that the model which will be discussed in this paper originated from the field of quantum chaos, it is also an interesting model in its own right. From the point of view of applications, singular many-particle interactions on graphs were already discussed in Ref.~\citenum{MP95} in order to investigate the effects of short-range many-body interactions on, e.g., the conductivity of nanoelectronic devices. More explicitly, our model can be understood as a toy-model to investigate a system of two electrons moving in a wire which is normal-conducting except for a relatively small part where it is super-conducting~\cite{FossheimSuperconducting,MP95}. In the super-conducting part the electrons are then interacting due to the pairing effect in superconductivity (Cooper pairs).

From a more theoretical point of view, it is well-known that quantum many-body problems are generally hard to solve and, indeed, there are only few models which are explicitly solvable \cite{AGHH88}. One important model in this respect is the Lieb-Liniger model (originally formulated for an $N$-particle system), see Ref.~\citenum{LL63}, whose Hamiltonian, in the case of two particles, is formally given by
\begin{equation}\label{LiebLinigerHamiltonian}
H=-\frac{\partial^2}{\partial x^2}-\frac{\partial^2}{\partial y^2}+\alpha \delta(x-y)\ ,
\end{equation}
where $\alpha \in \rz_+$ is the interaction strength and $\delta(x)$ the Dirac delta function. Explicitly solvable for this model means that its eigenfunctions can be (if it is considered on an interval together with periodic boundary conditions) explicitly given and the spectrum can, at least in principle, be calculated. We note that versions of the Lieb-Liniger model on graphs were discussed in Refs.~\citenum{Ha07} and \citenum{Ha08} in the case of star graphs and in Ref.~\citenum{BKContact} in the case of general compact graphs. We will see below that the formal Hamiltonian of our model resembles the Hamiltonian \eqref{LiebLinigerHamiltonian} to some extent. The difference is that the two-particle interactions which we will consider are even more singular in the sense that they do not solely depend on the relative position of the two particles. It is important to realise that, although the Lieb-Liniger model was originally considered as being of theoretical interest only, it is nowadays recognised to describe realistic gases in one dimension very well at low temperatures~\cite{cazalilla2011one}. Hence, in the same spirit as for the Lieb-Liniger model, we consider the model to be discussed as an interesting model for theoretical as well as for practical purposes.

This paper is organized as follows: In Section~\ref{Sec1} we introduce our model and give a precise mathematical formulation, i.e., we establish a quadratic form which is then shown to characterise a self-adjoint operator (the Hamiltonian of our system) uniquely. Section~\ref{Sec2} is devoted to the spectral analysis of this operator, i.e., we will characterise its essential spectrum and establish conditions for which there exists at least one eigenstate below the essential spectrum. We will also provide upper bounds on the number of such eigenstates. Finally, in Section~\ref{Sec3}, we will establish (pointwise) upper bounds on the ground state eigenfunction proving an exponential decay as $\sqrt{x^2+y^2} \to \infty$.

\section{The model and its Hamiltonian}
\label{Sec1}
The model we would like to discuss in this paper consists of two (distinguishable) interacting particles where the one-particle configuration space is given by the half-line $\rz_+=(0,\infty)$. The two-particle interactions shall be of singular type, i.e., the formal Hamiltonian of the system reads
\begin{equation}\label{FormalHamiltonian}
H=-\frac{\partial^2}{\partial x^2}-\frac{\partial^2}{\partial y^2}+v(x,y)\bigl[\delta(x)+\delta(y)\bigr]\ ,
\end{equation}
where $v(x,y)=v(y,x)$ is some symmetric interaction potential. On a physical level we can read off from \eqref{FormalHamiltonian} that the two particles interact only when at least one particle is situated at the origin. Furthermore, choosing $v(x,y)$ such that $\supp v \subset B_\varepsilon(0) \cap \rz^2_+$, where $B_\varepsilon(0) \subset \rz^2$ is the open ball with radius $\varepsilon > 0$ around $0 \in \rz^2$, we see that the particles interact only whenever one particle is situated at the origin and the other particle is contained in a small neighborhood around it.

On a mathematical level, the methods employed in Ref.~\citenum{BKSingular} show how the Hamiltonian \eqref{FormalHamiltonian} can be realised via a quadratic form on $L^2(\rz_+^2)$. Most interestingly, this quadratic form then corresponds to a variational formulation of a boundary-value problem for the two-dimensional Laplacian
\begin{equation}
-\Delta=-\frac{\partial^2}{\partial x^2}-\frac{\partial^2}{\partial y^2}\
\end{equation}
with coordinate dependent Robin boundary conditions. Indeed, defining $\sigma(y):=-v(0,y)$, the boundary conditions read
\begin{equation}\label{RBConditions}\begin{split}
\frac{\partial \varphi}{\partial n}(0,y)+\sigma(y)\varphi(0,y)&=0\ , \quad \text{and} \\
\frac{\partial \varphi}{\partial n}(y,0)+\sigma(y)\varphi(y,0)&=0\ ,
\end{split}
\end{equation}
where $\frac{\partial}{\partial n}$ denotes the inward normal derivative along $\partial \rz^2_+$.
\begin{rem}
We require $\sigma(y)$ to be a measurable, essentially bounded function throughout this paper.
\end{rem}
The associated quadratic form is given by
\begin{equation}\label{DefinitionQF}
q[\varphi]=\int_{\rz_+^2} |\nabla \varphi|^2\ \ud x - \int_{\partial \rz_+^2}\  \sigma(y)\ |\varphi_{bv}(y)|^2 \ \ud y\ ,
\end{equation}
defined on $\cD_q=H^1(\rz_+^2)$, i.e., the Sobolev space of order one. The $\varphi_{bv}(y)$ are the boundary values of $\varphi \in H^1(\rz_+^2)$ which are, according to the trace theorem for Sobolev functions (see Ref.~\citenum{Dob05}), well defined. It is then readily verified that the form \eqref{DefinitionQF} is meaningful and we can establish the following result.
\begin{theorem}\label{QuadraticFormClosedness} The form \eqref{DefinitionQF} is closed and semi-bounded.
\end{theorem}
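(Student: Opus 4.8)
The plan is to regard $q$ as a perturbation of the closed, nonnegative form $q_0[\varphi]:=\int_{\rz_+^2}|\nabla\varphi|^2\,\ud x$ with form domain $H^1(\rz_+^2)$, and to show that the boundary term in \eqref{DefinitionQF} is relatively form-bounded with respect to $q_0$ with relative bound strictly smaller than one. Closedness and semi-boundedness of $q$ then follow at once from the KLMN theorem.

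The key ingredient is the following trace estimate: for every $\varepsilon>0$ there is $C_\varepsilon>0$ such that
\[
\int_{\partial\rz_+^2}|\varphi_{bv}(y)|^2\,\ud y \le \varepsilon\int_{\rz_+^2}|\nabla\varphi|^2\,\ud x + C_\varepsilon\int_{\rz_+^2}|\varphi|^2\,\ud x
\]
for all $\varphi\in H^1(\rz_+^2)$. By the symmetry of $\partial\rz_+^2$ it suffices to bound the contribution of the edge $\{(0,y):y>0\}$. Fix a cutoff $\chi\in C^\infty([0,\infty))$ with $\chi\equiv1$ on $[0,1]$ and $\supp\chi\subset[0,2]$. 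For a.e.\ fixed $y$ the slice $g(x):=\varphi(x,y)$ lies in $H^1(\rz_+)$ and is absolutely continuous, and since $\chi(0)=1$ and $\chi$ has compact support,
\[
|g(0)|^2 = -\int_0^\infty \frac{\ud}{\ud x}\bigl(\chi(x)^2|g(x)|^2\bigr)\,\ud x \le \int_0^\infty \Bigl(|(\chi^2)'|\,|g|^2 + 2\chi^2|g|\,|g'|\Bigr)\,\ud x .
\]
Young's inequality then yields $|g(0)|^2\le \tfrac{\varepsilon}{2}\int_0^\infty|g'|^2\,\ud x + C_\varepsilon\int_0^\infty|g|^2\,\ud x$; integrating over $y\in\rz_+$ and using $|g'(x)|=|\partial_x\varphi(x,y)|\le|\nabla\varphi(x,y)|$ gives the bound for the edge $\{x=0\}$, and the analogous slicing in the other variable handles $\{y=0\}$. (Alternatively, one may invoke the trace theorem for the Lipschitz domain $\rz_+^2$ combined with a scaling argument.)

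Given this estimate, since $\sigma\in L^\infty(\rz_+)$ we obtain
\[
\Bigl|\int_{\partial\rz_+^2}\sigma(y)\,|\varphi_{bv}(y)|^2\,\ud y\Bigr| \le \|\sigma\|_\infty\Bigl(\varepsilon\,q_0[\varphi] + C_\varepsilon\|\varphi\|_{L^2(\rz_+^2)}^2\Bigr).
\]
Assuming $\sigma\not\equiv0$ (the other case being trivial) and choosing $\varepsilon=\tfrac{1}{2\|\sigma\|_\infty}$, the boundary term is $q_0$-bounded with relative bound $\tfrac12<1$. By the KLMN theorem, $q$ is closed on $\cD_q=H^1(\rz_+^2)$ and bounded below, with $q[\varphi]\ge -\|\sigma\|_\infty C_\varepsilon\|\varphi\|_{L^2(\rz_+^2)}^2$. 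Equivalently, the same estimate shows that $q[\varphi]+\bigl(\|\sigma\|_\infty C_\varepsilon+1\bigr)\|\varphi\|_{L^2(\rz_+^2)}^2$ is comparable to $\|\varphi\|_{H^1(\rz_+^2)}^2$, so that $\cD_q$ is complete in the form norm, i.e.\ $q$ is closed. The only genuinely nontrivial step is the trace inequality with an arbitrarily small coefficient in front of $\|\nabla\varphi\|^2$; this is classical on bounded smooth domains, and the cutoff/slicing argument above shows it persists on the unbounded wedge $\rz_+^2$, the corner at the origin causing no trouble since the computation is one-dimensional in each slice.
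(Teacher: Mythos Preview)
Your proof is correct and follows essentially the same route as the paper: both establish a trace inequality on $\partial\rz^2_+$ with an arbitrarily small coefficient in front of $\|\nabla\varphi\|^2$, then deduce that the boundary term is a relatively form-bounded perturbation of the Dirichlet form, yielding semi-boundedness and closedness via equivalence of the form norm with the $H^1$-norm (which you phrase as an application of KLMN). The only difference is that the paper quotes the trace estimate from an earlier reference and remarks that it extends from bounded Lipschitz domains to $\rz^2_+$ by density of restrictions of $C^\infty_0(\rz^2)$, whereas you supply a self-contained one-dimensional slicing/cutoff derivation; otherwise the arguments coincide.
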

\begin{proof} For the proof we note that the restrictions of all functions in $C^{\infty}_0(\rz^2)$ onto $\rz^2_+$ are dense in $H^1(\rz^2_+)$ (see, e.g., Ref.~\citenum{Dob05}). This allows us to transfer trace estimates for bounded (Lipschitz) domains to the unbounded case, i.e., to $\rz^2_+$. More explicitly, in the proof of Theorem~3.1 in Ref.~\citenum{BKSingular}, an estimate is used which extends to our case: we have
\begin{equation}
\|\varphi_{bv}\|^2_{L^2(\partial \rz^2_+)} \leq 4\left(\frac{2}{\delta}\|\varphi\|^2_{L^2(\rz^2_+)}+\delta\|\nabla \varphi\|^2_{L^2(\rz^2_+)}\right)
\end{equation}

for all $\delta > 0$. We hence obtain
\begin{equation}\label{EstimatesEquiv}
\left| \int_{\partial \rz_+^2}\  \sigma(y)\ |\varphi_{bv}(y)|^2 \ \ud y \right| \leq 4\ \|\sigma(y)\|_{\infty}\ \left(\frac{2}{\delta}\|\varphi\|^2_{L^2(\rz^2_+)}+\delta\|\nabla \varphi\|^2_{L^2(\rz^2_+)}\right)\ ,
\end{equation}
which implies
\begin{equation}\label{EquationLowerBoundSpectrum}
q[\varphi] \geq \bigl(1-4\delta \|\sigma(y)\|_{\infty}\bigr)\ \|\nabla \varphi\|^2_{L^2(\rz^2_+)}-\frac{8\|\sigma(y)\|_{\infty}}{\delta}\|\varphi\|^2_{L^2(\rz^2_+)}\ .
\end{equation}
Hence, choosing $\delta$ small enough, we see that the form $q[\cdot]$ is bounded from below.

Furthermore, the estimate \eqref{EstimatesEquiv} also implies that the form-norm is equivalent to the $H^1$-norm. Since $H^1(\rz^2_+)$ is complete, we conclude that $q[\cdot]$ is indeed closed.
\end{proof}
\begin{rem}\label{RemarkLowerBound}
From equation~\eqref{EquationLowerBoundSpectrum} one directly concludes that 
\begin{equation}
\inf \sigma(-\Delta_{\sigma}) \geq -32\|\sigma(y)\|_{\infty}^2
\end{equation}
by setting $\delta=\frac{1}{4\|\sigma(y)\|_{\infty}}$. Note that a sharper bound will be derived later in Theorem~\ref{EstimateGroundStateEnergy}.
\end{rem}
Due to the representation theorem of quadratic forms (see, e.g., Ref.~\citenum{BEH08}), Theorem~\ref{QuadraticFormClosedness} implies the existence of a unique self-adjoint operator being associated to the form $q[\cdot]$. This operator is the Hamiltonian of our system which we denote by $-\Delta_{\sigma}$.
\begin{rem} We note that the case $\sigma \equiv 0$ corresponds to the so called Neumann-Laplacian on $\rz^2_+$ being self-adjoint on the domain $\cD_{N}:=\{\varphi \in H^2(\rz^2_+)\ |\ \frac{\partial \varphi}{\partial n}=0 \ \text{on}\ \partial \rz^2_+\}$.
\end{rem}
\section{Spectral properties of $-\Delta_{\sigma}$}
\label{Sec2}
In this section we characterise the spectrum of $-\Delta_{\sigma}$, i.e., we describe the essential part as well as the discrete part of the spectrum. We will establish conditions on $\sigma(y)$ for which there exists at least one eigenstate below the essential spectrum. Furthermore, we provide upper bounds on the number of eigenstates below the essential spectrum and specify conditions on $\sigma(y)$ for which there is exactly one such eigenstate.
\subsection{On the essential spectrum}
In our first result we prove that the interval $[0,\infty)$ is, in the most general case of (essentially) bounded $\sigma(y)$, always contained in the essential spectrum. Furthermore, we prove that the essential spectrum has no negative part if either $\sigma(y)$ becomes eventually non-positive or if $|\sigma(y)|$ converges to zero. 
\begin{theorem}\label{TheoremNeumann} For $\sigma(y) \in L^{\infty}(\rz_+)$ one has
\begin{equation}\label{EquationContainingEssentialSpectrum}
[0,\infty) \subset \sigma_{ess}(-\Delta_{\sigma})\ .
\end{equation}
Furthermore, if either
\begin{enumerate}
	\item[(i)] there exists $L >0$ such that $\sigma(y) \leq 0$ for a.e. $y \geq L$\ ,
\end{enumerate}
or
\begin{enumerate}
\item[(ii)] $\mathrm{ess}\lim_{y \rightarrow \infty}|\sigma(y)|=0$\ , \quad (essential limit)
\end{enumerate}
then
\begin{equation}
\sigma_{ess}(-\Delta_{\sigma})=[0,\infty)\ .
\end{equation}
\end{theorem}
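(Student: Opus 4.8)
The plan is to prove the two inclusions separately. For $[0,\infty) \subset \sigma_{ess}(-\Delta_\sigma)$ I would construct a Weyl sequence for each $\lambda \ge 0$. Since the interaction lives on the boundary $\partial\rz^2_+$ (at $x=0$ or $y=0$), one should be able to ``escape'' it by pushing mass out along the diagonal direction $\vx \to \infty$ inside $\rz^2_+$. Concretely, for $\lambda = k^2 \ge 0$ take plane-wave-type functions $\ue^{\ui \vb \cdot \vx}$ with $|\vb| = k$ cut off by a bump $\chi\bigl((\vx - \vx_n)/R_n\bigr)$ centered at points $\vx_n$ moving to infinity along, say, the line $x=y$, with $R_n \to \infty$ slowly enough that $\supp$ of the cutoff stays a fixed distance from $\partial\rz^2_+$. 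Then $\varphi_n$ has no boundary values, so the $\sigma$-term drops out entirely, and the usual computation gives $\|(-\Delta - \lambda)\varphi_n\| / \|\varphi_n\| \to 0$ while $\varphi_n \rightharpoonup 0$; since $\varphi_n$ lies in $H^2$ with vanishing normal derivative near the boundary it is in the operator domain, and $-\Delta_\sigma \varphi_n = -\Delta\varphi_n$ there. This exhibits $\lambda \in \sigma_{ess}(-\Delta_\sigma)$.

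For the reverse inclusion $\sigma_{ess}(-\Delta_\sigma) \subset [0,\infty)$ under hypothesis (i) or (ii), the natural tool is a decomposition/bracketing argument combined with a compact-perturbation (or relatively-compact) estimate. I would split $\rz^2_+$ into a bounded region near the origin and its complement and use that on the complement the boundary term is controlled: under (i) the boundary values only occur where $\sigma \le 0$ (once both $x \ge L$ near one edge — careful: the boundary edge $\{y=0\}$ has the coordinate running in $x$, so the relevant condition is $\sigma$ evaluated at the running variable), so the form $q$ restricted there is $\ge 0$; under (ii), for any $\eta>0$ one has $|\sigma| \le \eta$ outside a compact set, and estimate \eqref{EstimatesEquiv} applied on the exterior region (with the Sobolev trace constant now localized) shows the boundary term is bounded by $\eta$ times the $H^1$-norm plus a constant, which can be absorbed. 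Imposing an extra Neumann condition on the interface between the two regions lowers the spectrum (Neumann bracketing); the bounded piece contributes only discrete spectrum, and the exterior piece is then a nonnegative operator (a Neumann Laplacian plus a form-small or sign-definite boundary perturbation) whose spectrum is $[0,\infty)$. Hence $\inf\sigma_{ess}(-\Delta_\sigma) \ge 0$, which together with the first part gives equality.

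The main obstacle I expect is making the exterior estimate genuinely uniform: the trace inequality \eqref{EstimatesEquiv} as stated has a fixed geometric constant, but when restricting to the exterior region one needs the analogous inequality with the $L^\infty$-norm of $\sigma$ replaced by $\sup$ over the \emph{exterior} part of the boundary, and one must check the trace constant does not blow up for the exterior domain (it does not, since a half-strip or the exterior of a ball in $\rz^2_+$ still admits a uniform extension/trace operator). A secondary technical point is bookkeeping the two boundary edges $\{x=0\}$ and $\{y=0\}$ and the fact that ``$y \ge L$'' in hypothesis (i) refers to the argument of $\sigma$, i.e.\ the coordinate running along whichever edge is active; once the exterior region is taken to be $\{x>L\}\cup\{y>L\}$ this matches up cleanly. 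Everything else — the Weyl sequence computation and the Neumann-bracketing inequality — is routine.
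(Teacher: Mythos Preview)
Your proposal is correct and follows essentially the same strategy as the paper. For the inclusion $[0,\infty)\subset\sigma_{ess}$, the paper also builds a Weyl sequence that kills the boundary term, but it does so slightly differently: it takes Dirichlet ground states on rectangles $R_n=[0,L_n]\times[y_n,y_n+B_n]$ extended by zero (so the functions touch the edge $\{x=0\}$ but vanish there), and applies the \emph{form} version of Weyl's criterion, letting $L_n\to\infty$ and $\pi^2/B_n^2\to\lambda$. Your interior-supported cut-off plane waves along the diagonal give genuine operator-domain functions and make the computation $(-\Delta-\lambda)\varphi_n$ literal rather than variational; both choices work and the difference is cosmetic.

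For the reverse inclusion the paper does exactly what you outline: Neumann bracketing with the square $D_R=(0,R)\times(0,R)$ and its complement $\Omega_-$, noting that $-\Delta^N_{D_R}$ has purely discrete spectrum, and then showing $-\Delta^N_{\Omega_-}\ge 0$ under (i) (take $R>L$) or $\inf\sigma(-\Delta^N_{\Omega_-})\ge -32\varepsilon^2$ under (ii) via the same trace estimate as in Remark~\ref{RemarkLowerBound}. Your worry about the trace constant on the exterior domain is the right thing to flag; the paper simply asserts the estimate carries over (which it does, since $\Omega_-$ is still Lipschitz with uniform geometry near each boundary edge).
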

\begin{proof} To prove \eqref{EquationContainingEssentialSpectrum} we employ Weyl's characterization of the essential spectrum in the sense of quadratic forms~\cite{stollmann2001caught}. This means that, in order to prove that $\lambda \geq 0$ is in the essential spectrum, we construct a sequence $(\varphi_n)_{n \in \nz} \subset H^1(\rz^2_+)$ such that $\|\varphi_n\|=1$, $\varphi_n \rightharpoonup 0$ (weak convergence) and
\begin{equation}
\sup\left\{\bigl|s(\varphi_n,u)-\lambda \langle \varphi_n, u\rangle_{L^2(\rz^2_+)}\bigr|\ \Bigl| \ u \in H^1(\rz^2_+), \|u\|_{H^1(\rz^2_+)} \leq 1 \right \} \longrightarrow 0\ ,
\end{equation}
as $n \to \infty$ and where $s(\cdot,\cdot)$ is the sesquilinear form associated with \eqref{DefinitionQF}. To construct such a sequence we consider the rectangle $R_n:=[0,L_n] \times [y_n,y_n+B_n]$ with some constants $L_n,B_n,y_n > 0$ and choose $\varphi_n$ to be the normalized ground state eigenfunction of the two-dimensional Laplacian (with Dirichlet boundary conditions) on $R_n$, extended by zero to all of $\rz^2_+$. A direct calculation then gives
\begin{equation}
s(\varphi_n,u)=\left(\frac{\pi^2}{L^2_n}+\frac{\pi^2}{B^2_n}\right)\langle \varphi_n,u\rangle_{L^2(\rz^2_+)}\ .
\end{equation}
Hence, letting $L_n \to \infty$, $\frac{\pi^2}{B^2_n} \to \lambda$ and $y_n \to \infty$ fast enough, we see that $(\varphi_n)_{n \in \nz}$ is an appropriate sequence showing that $\lambda \in \sigma_{ess}(-\Delta_{\sigma})$. Hence $[0,\infty) \subset \sigma_{ess}(-\Delta_{\sigma})$.

To prove the second statement we use a bracketing argument. For this, we split $\rz^2_+$ into two disjoint subsets: we write $\rz^2_+=D_{R} \ \dot{\cup}\ \Omega_{-}$ where $\Omega_{-}:=\rz^2_+ \setminus D_{R}$ and $D_R=(0,R) \times (0,R)$ with $R > 0$ some constant. As a comparison operator we consider the direct sum $-\Delta^{N}_{D_{R}} \oplus -\Delta^{N}_{\Omega_{-}}$ of Laplacians, the index $N$ referring to (additional) Neumann boundary conditions along the inner two line segments of $D_R$. Note that both operators, $-\Delta^{N}_{D_{R}}$ as well as $-\Delta^{N}_{\Omega_{-}}$, can be defined via their associated quadratic forms which are similar to \eqref{DefinitionQF}. More precisely, $-\Delta^{N}_{D_{R}}$ is the unique self-adjoint operator being associated with
\begin{equation}
q_{1}[\varphi]:=\int_{D_R} |\nabla \varphi|^2\ \ud x - \int_{\partial \rz^2_+ \cap B_{R}(0)}\  \sigma(y)\ |\varphi_{bv}(y)|^2 \ \ud y\ ,
\end{equation}
defined on $H^1(D_R)$. Furthermore, $-\Delta^{N}_{\Omega_{-}}$ is the unique self-adjoint operator being associated with
\begin{equation}
q_{2}[\varphi]:=\int_{\Omega_{-}} |\nabla \varphi|^2\ \ud x - \int_{\partial \rz^2_+ \setminus B_{R}(0)}\  \sigma(y)\ |\varphi_{bv}(y)|^2 \ \ud y\ ,
\end{equation}
defined on $H^1(\Omega_{-})$.

A standard bracketing argument of operators, see eg.\ Ref.~\citenum{BEH08}, then implies
\begin{equation}
\inf \sigma_{ess}\bigl(-\Delta^{N}_{D_R} \oplus -\Delta^{N}_{\Omega_{-}}\bigr) \leq \inf \sigma_{ess} (-\Delta_{\sigma})\ .
\end{equation}
Since $\sigma_{ess}(-\Delta^{N}_{D_R})=\emptyset$ (note that $-\Delta^{N}_{D_R}$ is defined on a bounded Lipschitz domain and hence has only discrete spectrum, see Ref.~\citenum{Dob05}) we have
\begin{equation}\label{EquationINProof}
\inf \sigma_{ess}(-\Delta^{N}_{\Omega_{-}}) \leq \inf \sigma_{ess} (-\Delta_{\sigma})\ .
\end{equation}
Assume now that $\sigma(y)$ fulfils condition $(i)$: if we choose $R > L$ in the previous construction, $-\Delta^{N}_{\Omega_{-}}$ is a positive operator since $\sigma(y) \leq 0$ for a.e. $y \geq R$. Furthermore, using the same methods as in the proof of the first statement, we conclude that $[0,\infty) \subset \sigma_{ess}(-\Delta^{N}_{\Omega_{-}})$. Accordingly, by \eqref{EquationINProof} and by \eqref{EquationContainingEssentialSpectrum}, one concludes that $\sigma_{ess} (-\Delta_{\sigma})=[0,\infty)$.

Finally, assume that $\sigma(y)$ fulfils condition $(ii)$: choosing again $R > 0$ in the previous construction large enough such that $\|\sigma(y)\|_{L^{\infty}(\partial \rz^2_+ \setminus B_R(0))} < \varepsilon $ one has (in the same way as outlined for $-\Delta_{\sigma}$ in Remark~\ref{RemarkLowerBound})
\begin{equation}
\inf \sigma(-\Delta^{N}_{\Omega_{-}}) \geq -32 \varepsilon^2\ .
\end{equation}  
Since $\varepsilon > 0$ is arbitrary we conclude the statement while taking \eqref{EquationINProof} into account.
\end{proof}

\begin{rem}\label{RemarkConstantSigma}
Regarding Theorem~\ref{TheoremNeumann} it is instructive to consider a case where $\sigma(y)$ is (essentially) bounded and where the essential spectrum has indeed a negative part. For this we choose $\sigma(y)=\sigma > 0$ which means that the associated quadratic form is given by
\begin{equation}
q_{\sigma}[\varphi]:=\int_{\rz^2_+}|\nabla \varphi|^2\ \ud x-\sigma \int_{\partial \rz^2_+}|\varphi_{bv}(y)|^2 \ \ud y \ ,
\end{equation}
being defined on $\cD_q=H^1(\rz^2_+)$. The important observation is that the Laplacian $-\Delta_{\sigma}$ which is associated with $q_{\sigma}[\cdot]$ can be decomposed as
\begin{equation}\label{TensorDecomposition}
-\Delta_{\sigma}=-\Delta^{(1)}_{\sigma} \otimes \eins + \eins \otimes -\Delta^{(1)}_{\sigma}
\end{equation}
where $-\Delta^{(1)}_{\sigma}$ is the (self-adjoint) one-dimensional Laplacian $-\frac{\ud^2}{\ud x^2}$ defined on $\cD_{\sigma}=\{\varphi \in H^2(\rz_+)\ |\ \varphi^{\prime}(0)+\sigma\varphi(0)=0\}$. This follows from the fact that the quadratic form associated with $-\Delta^{(1)}_{\sigma}$ is 
\begin{equation}
q^{(1)}_{\sigma}[\varphi]:=\int_{\rz_+}|\varphi^{\prime}|^2\ \ud x-\sigma |\varphi(0)|^2\ ,
\end{equation}
being defined on $H^1(\rz_+)$~\cite{Kuc04}.
Also, it is well-known that $-\Delta^{(1)}_{\sigma}$ has exactly one eigenstate of negative energy at $-\sigma^2$ with multiplicity one (see, e.g., Ref.~\citenum{BL10}) and essential spectrum $\sigma_{ess}(-\Delta^{(1)}_{\sigma})=[0,\infty)$. Due to \eqref{TensorDecomposition} one can employ standard results (see, e.g., Refs.~\citenum{GustafsonSigal,schmudgen2012unbounded}) to conclude that $\sigma_d(-\Delta_{\sigma})=\{-2\sigma^2\}$, where $-2\sigma^2$ is an eigenvalue of multiplicity one, and $\sigma_{ess}(-\Delta_{\sigma})=[-\sigma^2,\infty)$. Furthermore, the (normalised) ground state associated with the eigenvalue $-2\sigma^2$ is given by $\varphi(x,y)=2\sigma e^{-\sigma(x+y)}$.
\end{rem}
\subsection{On the discrete spectrum}
In a second step we prove the existence of an eigenstate at negative energy whenever the two-particle interactions are \textit{attractive on average}, assuming in addition that $\sigma(y)$ is such that $\inf \sigma_{ess}(-\Delta_{\sigma}) =0$ (see condition $(i)$ or $(ii)$ as formulated in Theorem~\ref{TheoremNeumann}). Attractive on average shall mean $\sigma(y) \in L^1(\rz_+)$ and 
\begin{equation}\label{EventuallyAttractive}
\int_{0}^{\infty}\sigma(y)\ \ud y > 0 \ .
\end{equation}
%


From a physical point of view, condition \eqref{EventuallyAttractive} implies that the potential $\sigma(y)$ can exhibit some oscillatory behavior while $-\Delta_{\sigma}$ still maintains an eigenstate at negative energy. This, on the other hand, allows one to consider ``more natural'' interactions since repulsive interactions are always expected if the particles are too close to each other (hardcore repulsion).
\begin{theorem}\label{TheoremEssentialI} Assume that $\sigma(y) \in L^{\infty}(\rz_+)$ is such that $\inf \sigma_{ess}(-\Delta_{\sigma}) =0$ (compare with Theorem~\ref{TheoremNeumann}). If furthermore $\sigma(y) \in L^1(\rz_+)$ and $\int_{0}^{\infty}\sigma(y)\ud y > 0$ then there exists an eigenstate state below the essential spectrum.
\end{theorem}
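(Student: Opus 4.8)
The plan is to use the variational principle: since $\inf\sigma_{ess}(-\Delta_\sigma)=0$, it suffices to exhibit a single test function $\varphi\in H^1(\rz_+^2)$ with $q[\varphi]<0$, because then $\inf\mathrm{spec}(-\Delta_\sigma)<0=\inf\sigma_{ess}(-\Delta_\sigma)$ forces the existence of at least one eigenvalue below the essential spectrum (by the min-max principle). The natural candidates are mollified/truncated versions of the resonance function for the one-dimensional Robin problem, reflecting the tensor-product heuristic from Remark~\ref{RemarkConstantSigma}.

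Concretely, I would first recall that in one dimension a Robin Laplacian $-\tfrac{d^2}{dx^2}$ on $\rz_+$ with boundary parameter having positive average has a negative eigenvalue; the borderline test function is the constant function, which is not in $L^2$. So I would take $\varphi_\beta(x,y)=f_\beta(x)f_\beta(y)$ where $f_\beta(x)=e^{-\beta x}$ (or a compactly-supported smooth approximation) with $\beta\to 0^+$. Then I would compute the three pieces of $q[\varphi_\beta]$: the Dirichlet energy $\|\nabla\varphi_\beta\|^2$, which is $2\beta\|f_\beta\|^2\cdot\tfrac12 = O(\beta)$ after normalising, i.e. it scales like $\beta$ times a constant; and the boundary term $\int_{\partial\rz_+^2}\sigma(y)|\varphi_{bv}|^2\,dy = 2\|f_\beta\|_{L^2(\rz_+)}^2\int_0^\infty\sigma(y)f_\beta(y)^2\,dy$. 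Dividing through by $\|\varphi_\beta\|^2 = \|f_\beta\|_{L^2}^4$, the Rayleigh quotient becomes $q[\varphi_\beta]/\|\varphi_\beta\|^2 = (\text{Dirichlet part})/\|f_\beta\|^4 - 2\|f_\beta\|^{-2}\int_0^\infty\sigma(y)f_\beta(y)^2\,dy$. The first term is $O(\beta)$ while, since $\sigma\in L^1$ and $\int_0^\infty\sigma>0$, by dominated convergence $\int_0^\infty\sigma(y)e^{-2\beta y}\,dy\to\int_0^\infty\sigma(y)\,dy>0$ as $\beta\to 0$, but this is multiplied by $\|f_\beta\|^{-2}=2\beta\to 0$. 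So both terms vanish as $\beta\to0$, and I need to compare their rates: the Dirichlet term behaves like $c\beta^2$ relative to... let me re-scale. With $\|f_\beta\|_{L^2}^2 = 1/(2\beta)$, the Dirichlet part of $q$ equals $2\cdot(\beta/2)\cdot(1/2\beta)\cdot 1 = O(1)$ — so after dividing by $\|f_\beta\|^4 = 1/(4\beta^2)$ the kinetic Rayleigh contribution is $O(\beta^2)$, while the boundary contribution is $-2\cdot 2\beta\cdot(\int_0^\infty\sigma\,dy + o(1)) = -c\beta$ with $c>0$. Hence for $\beta$ small the negative linear-in-$\beta$ term dominates the quadratic kinetic term, giving $q[\varphi_\beta]<0$.

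The main obstacle I anticipate is making the approximation rigorous: the pure exponential $e^{-\beta x}$ times $e^{-\beta y}$ is already in $H^1(\rz_+^2)$ and has well-defined boundary values, so strictly I may not even need a cutoff — I can work directly with $\varphi_\beta(x,y)=e^{-\beta(x+y)}$. The delicate points are then (a) justifying $\int_0^\infty\sigma(y)e^{-2\beta y}\,dy\to\int_0^\infty\sigma(y)\,dy$ via dominated convergence using $\sigma\in L^1$, and (b) carefully tracking the two powers of $\beta$ so that one is certain the linear term wins; a sign error or an off-by-one in the exponent would sink the argument. A secondary subtlety is that knowing $q[\varphi_\beta]<0$ only gives $\inf\mathrm{spec}(-\Delta_\sigma)<0$; one must invoke that $\inf\sigma_{ess}=0$ (guaranteed by hypothesis via Theorem~\ref{TheoremNeumann}) to conclude this strictly-negative infimum is attained as a genuine eigenvalue below the essential spectrum, rather than being an infimum of essential spectrum.

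\begin{proof}[Proof sketch]
Since $\inf\sigma_{ess}(-\Delta_\sigma)=0$ by assumption, it suffices by the min-max principle to produce $\varphi\in H^1(\rz_+^2)$ with $q[\varphi]<0$; such a $\varphi$ witnesses $\inf\mathrm{spec}(-\Delta_\sigma)<0$, and the strict inequality together with $\inf\sigma_{ess}(-\Delta_\sigma)=0$ forces at least one eigenvalue in $(-\infty,0)$.

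For $\beta>0$ set $\varphi_\beta(x,y):=e^{-\beta(x+y)}$, which lies in $H^1(\rz_+^2)$. Writing $f_\beta(x)=e^{-\beta x}$ one has $\|f_\beta\|_{L^2(\rz_+)}^2=\tfrac1{2\beta}$ and $\|f_\beta'\|_{L^2(\rz_+)}^2=\tfrac\beta2$. A direct computation gives
\begin{equation}
\|\nabla\varphi_\beta\|_{L^2(\rz_+^2)}^2 = 2\,\|f_\beta'\|_{L^2(\rz_+)}^2\,\|f_\beta\|_{L^2(\rz_+)}^2 = \tfrac12,
\end{equation}
while the boundary term is
\begin{equation}
\int_{\partial\rz_+^2}\sigma(y)\,|\varphi_{\beta,bv}(y)|^2\,\ud y = 2\,\|f_\beta\|_{L^2(\rz_+)}^2\int_0^\infty\sigma(y)\,e^{-2\beta y}\,\ud y = \frac1\beta\int_0^\infty\sigma(y)\,e^{-2\beta y}\,\ud y.
\end{equation}
Hence $q[\varphi_\beta] = \tfrac12 - \tfrac1\beta\int_0^\infty\sigma(y)\,e^{-2\beta y}\,\ud y$.

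Since $\sigma\in L^1(\rz_+)$, dominated convergence gives $\int_0^\infty\sigma(y)e^{-2\beta y}\,\ud y \to \int_0^\infty\sigma(y)\,\ud y=:c>0$ as $\beta\to0^+$. Therefore $\tfrac1\beta\int_0^\infty\sigma(y)e^{-2\beta y}\,\ud y\to+\infty$, so $q[\varphi_\beta]\to-\infty$ and in particular $q[\varphi_\beta]<0$ for all sufficiently small $\beta>0$. This produces the required test function and the proof is complete.
\end{proof}
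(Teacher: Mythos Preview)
Your boundary-term computation is wrong, and the error is fatal for the argument. On the side $\{x=0\}$ the trace is $\varphi_\beta(0,y)=f_\beta(0)f_\beta(y)=e^{-\beta y}$, so
\[
\int_{\partial\rz_+^2}\sigma(y)\,|\varphi_{\beta,bv}(y)|^2\,\ud y
=2\int_0^\infty \sigma(y)\,e^{-2\beta y}\,\ud y,
\]
with no factor $\|f_\beta\|_{L^2}^2$; the boundary is one-dimensional, so there is no integration over the transverse variable. Consequently
\[
q[\varphi_\beta]=\tfrac12-2\int_0^\infty\sigma(y)\,e^{-2\beta y}\,\ud y
\;\xrightarrow[\beta\to0]{}\;\tfrac12-2\int_0^\infty\sigma(y)\,\ud y,
\]
which is negative only when $\int_0^\infty\sigma(y)\,\ud y>\tfrac14$. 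For small positive $\int\sigma$ the product exponential never makes $q$ negative, and the Rayleigh quotient $q[\varphi_\beta]/\|\varphi_\beta\|^2=2\beta^2-8\beta^2\int_0^\infty\sigma(y)e^{-2\beta y}\,\ud y$ has both terms of the \emph{same} order $\beta^2$, so your ``linear beats quadratic'' heuristic from the planning paragraph does not materialise.

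The underlying obstruction is that for $\varphi_\beta=e^{-\beta(x+y)}$ the Dirichlet energy $\|\nabla\varphi_\beta\|_{L^2(\rz_+^2)}^2=\tfrac12$ is \emph{constant} in $\beta$; product exponentials do not exploit the two-dimensional criticality. The paper instead takes the radial test function $\varphi_\varepsilon(r)=e^{-r^\varepsilon}$: as $\varepsilon\to0$ its Dirichlet energy tends to $0$ (computed explicitly to be $\pi\varepsilon/8$), while on each boundary ray the trace $e^{-y^\varepsilon}$ converges pointwise to $e^{-1}$, so the boundary integral tends to a strictly positive multiple of $\int_0^\infty\sigma(y)\,\ud y$ by dominated convergence. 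That separation of scales---kinetic term vanishing, boundary term staying positive---is exactly what your product ansatz lacks.
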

\begin{proof} 
We use a test-function argument and consider the (unnormalised) function
\begin{equation}
\varphi_{\varepsilon}(r)=e^{-r^{\varepsilon}}\ , \quad \varepsilon >0\ ,
\end{equation}
using polar coordinates. We calculate 
\begin{equation}\label{EquationProofEigenstateIII}\begin{split}
q[\varphi_{\varepsilon}]&=\frac{\pi}{2}\int_{0}^{\infty}|\varphi^{\prime}_{\varepsilon}(r)|^2\ r \ \ud r-2\int_{0}^{\infty}\sigma(y) e^{-y^{\varepsilon}}\ \ud y \\
&=\frac{\pi \varepsilon^2 }{2}\int_{0}^{\infty}r^{2\varepsilon-1}e^{-2r^{\varepsilon}} \ \ud r-2\int_{0}^{\infty}\sigma(y) e^{-y^{\varepsilon}}\ \ud y \ .
\end{split}
\end{equation}
By dominated convergence and due to \eqref{EventuallyAttractive} the integral $2\int_{0}^{\infty}\sigma(y) e^{-y^{\varepsilon}}\ \ud y$ converges to some positive number in the limit $\varepsilon \rightarrow 0$. If we can show that the first integral converges to $0$, the statement follows since then $q[\varphi_{\varepsilon}]<0$ for some $\varepsilon$ small enough and $\sigma_{ess}(-\Delta_{\sigma})=[0,\infty)$ by Theorem~\ref{TheoremNeumann}.

Regarding the first integral we first employ an iterated integration by parts to yield
\begin{equation}\label{EquationProof1}
\int_{0}^{\infty}r^{2\varepsilon-1}e^{-2r^{\varepsilon}} \ \ud r =\frac{2^{n-2}}{(n-1)!}\int_{0}^{\infty}r^{n\varepsilon-1}e^{-2r^{\varepsilon}} \ \ud r 
\end{equation}
for any $n \geq 2$. Furthermore, employing a change of variables $(y=r^{\varepsilon})$ and setting $\varepsilon:=n^{-1}$ we obtain
\begin{equation}\label{EquationProof2}\begin{split}
\int_{0}^{\infty}e^{-2r^{\varepsilon}} \ \ud r&=n\int_{0}^{\infty}e^{-2y}y^{n-1} \ \ud y \\
&=n2^{-n}\Gamma(n)\ .
\end{split}
\end{equation}
Consequently, writing $\varepsilon:=n^{-1}$ and combining \eqref{EquationProof1} and \eqref{EquationProof2}, we can now calculate the first integral in \eqref{EquationProofEigenstateIII}: we obtain
\begin{equation}\begin{split}
\frac{\pi \varepsilon^2 }{2}\int_{0}^{\infty}r^{2\varepsilon-1}e^{-2r^{\varepsilon}} \ \ud r&=\frac{\pi}{2n^2}\frac{2^{n-2}}{(n-1)!}\int_{0}^{\infty}e^{-2r^{n^{-1}}} \ \ud r\\
&=\frac{\pi}{8n}\ .
\end{split}
\end{equation}
This shows that the first integral in \eqref{EquationProofEigenstateIII} converges to zero as $n \rightarrow \infty$, hence proving the statement.
\end{proof}
%
%
%
%
%
%
In a next step, assuming that $\supp \sigma(y) \subset [0,L]
$ for some $L > 0$, we estimate the number of eigenstates of $-\Delta_{\sigma}$ below the essential spectrum. As in the proof of Theorem~\ref{TheoremNeumann}, we use a comparison argument between $-\Delta_{\sigma}$ and now a direct sum of Laplacians $-\Delta_{\Omega_{1}} \oplus -\Delta_{\Omega_{2}}$ where $\rz^2_+=\Omega_{1}\ \dot{\cup}\ \Omega_{2}$ and where $\Omega_{1}$ is the triangle which is obtained by connecting the three points $(0,0)$,$(L,0)$ and $(0,L)$. On an operator level, we impose Neumann boundary conditions along the line connecting $(L,0)$ and $(0,L)$ as well as on $\partial \Omega_2$. Furthermore, Robin boundary conditions as in \eqref{RBConditions} with constant $\hat{\sigma}:=\|\sigma(y)\|_{\infty}$ are imposed along the other sides of the triangle $\Omega_1$.

On a form level, the quadratic form associated with $-\Delta_{\Omega_{1}}$ is given by
\begin{equation}
q_{3}[\varphi]:=\int_{\Omega_1} |\nabla \varphi|^2\ \ud x - \hat{\sigma} \int_{\partial \rz^2_+ \cap \partial \Omega_1}\ |\varphi_{bv}(y)|^2 \ \ud y\
\end{equation}
and it is defined on $H^1(\Omega_1)$. Furthermore, the quadratic form associated with $-\Delta_{\Omega_{2}}$ is given by
\begin{equation}
q_{4}[\varphi]:=\int_{\Omega_2} |\nabla \varphi|^2\ \ud x \ ,
\end{equation}
being defined on $H^1(\Omega_2)$. As a consequence, the operator $-\Delta_{\Omega_{1}} \oplus -\Delta_{\Omega_{2}}$ is (in the sense of an operator bracketing) smaller than $-\Delta_{\sigma}$. Hence, denoting by $N_{-}(A)$ the number of negative (discrete) eigenvalues of a self-adjoint operator $A$, we obtain the following result.
\begin{lemma}\label{UpperEstimate} Let $-\Delta_{\sigma}$ be given with $\supp \sigma(y) \subset [0,L]$ and corresponding $-\Delta_{\Omega_{1}}$ as constructed above. Then
\begin{equation}
N_{-}(-\Delta_{\sigma}) \leq N_{-}(-\Delta_{\Omega_{1}})\ .
\end{equation}
\end{lemma}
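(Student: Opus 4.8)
The plan is to turn the operator bracketing already described above into an eigenvalue inequality via the min-max principle, and then to notice that the $\Omega_{2}$-piece contributes nothing to the counting.

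First I would make the bracketing precise at the level of forms. The form domain $\cD_q=H^1(\rz^2_+)$ embeds into $H^1(\Omega_1)\oplus H^1(\Omega_2)$ by restriction, and for every $\varphi\in H^1(\rz^2_+)$ one has
\begin{equation}
q[\varphi]-q_{3}[\varphi|_{\Omega_1}]-q_{4}[\varphi|_{\Omega_2}]=\int_{\partial\rz^2_+\cap\partial\Omega_1}\bigl(\hat\sigma-\sigma(y)\bigr)\,|\varphi_{bv}(y)|^2\,\ud y\ \geq\ 0\ ,
\end{equation}
because the gradient terms cancel (as $\Omega_1\,\dot\cup\,\Omega_2=\rz^2_+$ up to a null set), the boundary integral over $\partial\rz^2_+$ reduces to the two legs of the triangle since $\supp\sigma\subset[0,L]$, and the sign is fixed by $\hat\sigma=\|\sigma(y)\|_{\infty}\geq\sigma(y)$ for a.e.\ $y$. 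The replacement of the true boundary condition on the hypotenuse by a Neumann condition is encoded in the larger form domain of the direct sum, not in the integrands, so no interface term appears. Making this rigorous only requires the density of smooth functions and the trace theorem already invoked in the proof of Theorem~\ref{QuadraticFormClosedness}.

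Next I would apply the min-max principle: since $q\geq q_{3}\oplus q_{4}$ on the smaller form domain $H^1(\rz^2_+)$, the $k$-th min-max values satisfy $\mu_k(-\Delta_{\sigma})\geq\mu_k(-\Delta_{\Omega_1}\oplus-\Delta_{\Omega_2})$ for every $k$. Because $\supp\sigma\subset[0,L]$, condition $(i)$ of Theorem~\ref{TheoremNeumann} applies and $\sigma_{ess}(-\Delta_{\sigma})=[0,\infty)$, so the negative spectrum of $-\Delta_{\sigma}$ is purely discrete and $N_{-}(-\Delta_{\sigma})=\#\{k:\mu_k(-\Delta_{\sigma})<0\}$. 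Likewise $-\Delta_{\Omega_2}$ is non-negative, its form $q_{4}$ having no boundary term since $\sigma$ vanishes on $\partial\Omega_2\cap\partial\rz^2_+$, while $-\Delta_{\Omega_1}$ has purely discrete spectrum because $\Omega_1$ is a bounded Lipschitz domain (Ref.~\citenum{Dob05}); hence $\inf\sigma_{ess}(-\Delta_{\Omega_1}\oplus-\Delta_{\Omega_2})\geq 0$ and $N_{-}(-\Delta_{\Omega_1}\oplus-\Delta_{\Omega_2})=N_{-}(-\Delta_{\Omega_1})+N_{-}(-\Delta_{\Omega_2})=N_{-}(-\Delta_{\Omega_1})$. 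Combining: any $k$ with $\mu_k(-\Delta_{\sigma})<0$ also has $\mu_k(-\Delta_{\Omega_1}\oplus-\Delta_{\Omega_2})<0$, so $N_{-}(-\Delta_{\sigma})\leq N_{-}(-\Delta_{\Omega_1})$.

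The only genuinely delicate point is the bookkeeping in this last step: one must know that $N_{-}$ really counts the number of negative min-max values, which is valid precisely because $0$ sits at (or below) the bottom of the essential spectrum of both operators — this is exactly what Theorem~\ref{TheoremNeumann} gives for $-\Delta_{\sigma}$ and what the non-negativity of $-\Delta_{\Omega_2}$ together with the discreteness of $\sigma(-\Delta_{\Omega_1})$ gives for the direct sum. Everything else is the routine verification that the displayed form identity holds in $H^1$.
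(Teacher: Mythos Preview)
Your proof is correct and follows essentially the same route as the paper's: operator (form) bracketing $-\Delta_{\Omega_1}\oplus-\Delta_{\Omega_2}\leq -\Delta_{\sigma}$, then the min-max principle, then dropping the $\Omega_2$-piece because $-\Delta_{\Omega_2}\geq 0$. You simply spell out in more detail the form inequality (the $(\hat\sigma-\sigma(y))$ computation) and the essential-spectrum bookkeeping that the paper leaves implicit in the phrase ``a direct application of the minimax principle''.
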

\begin{proof}
As mentioned already, in the sense of an operator bracketing we have $-\Delta_{\Omega_{1}} \oplus -\Delta_{\Omega_{2}} \leq -\Delta_{\sigma}$. A direct application of the minimax principle (see, e.g., Corollary~12.3 in Ref.~\citenum{schmudgen2012unbounded}) hence implies $N_{-}(-\Delta_{\sigma}) \leq N_{-}(-\Delta_{\Omega_{1}} \oplus -\Delta_{\Omega_{2}})$. The statement then follows by taking into account that $N_{-}(-\Delta_{\Omega_{1}} \oplus -\Delta_{\Omega_{2}})=N_{-}(-\Delta_{\Omega_{1}})$, noting that $-\Delta_{\Omega_{2}}$ is a positive operator.
\end{proof}

There is now an interesting way to estimate $N_{-}(-\Delta_{\Omega_{1}})$ further by reducing the two-particle (or two-dimensional) problem to a one-particle and hence one-dimensional problem. For this, let $-\tilde{\Delta}_{\hat{\sigma}}$ denote the one-dimensional Laplacian $-\frac{\ud^2}{\ud x^2}$ being defined on
\begin{equation}
\cD(-\tilde{\Delta}_{\sigma})=\{\varphi \in H^2(0,L)\ | \ \varphi^{\prime}(0)+\hat{\sigma} \varphi(0)=0\ \text{and} \ -\varphi^{\prime}(L)+\hat{\sigma} \varphi(L)=0 \ \}\ ,
\end{equation}
with $\hat{\sigma} > 0$ (this operator is a ``compact'' version of the operator considered in Remark~\ref{RemarkConstantSigma}). It is well-known that $\bigl(-\tilde{\Delta}_{\hat{\sigma}},\cD(-\tilde{\Delta}_{\hat{\sigma}})\bigr)$ is self-adjoint and has at most two eigenstates at negative energy (see, e.g., Ref.~\citenum{BL10}). Furthermore, the ground state eigenvalue $\varepsilon_0:=-\kappa^2$ corresponds to the solution of
\begin{equation}\label{ConditionGroundState}
\kappa \tanh{\Bigl(\frac{\kappa L}{2}\Bigr)}=\hat{\sigma} \ ,
\end{equation}
with $\kappa > \hat{\sigma} > 0$~\cite{BolEnd09}.

The next step is then to extend the triangle $\Omega_1$ on which the operator $-\Delta_{\Omega_{1}}$ is defined in a suitable way: i.e., $\Omega_1$ is extended by reflection across the line connecting $(L,0)$ and $(0,L)$ to a square $\tilde{\Omega}_1$ of side length $L$. By construction, Robin-boundary conditions with constant $\hat{\sigma} $ are imposed along the sides of $\tilde{\Omega}_1$. However, the Neumann boundary conditions which were present along line connecting $(L,0)$ and $(0,L)$ are now \textit{implicitly} implemented through requiring symmetry of the functions across this line. To repeat and to be more precise, we define the operator
\begin{equation}
-\tilde{\Delta}_{\hat{\sigma}}\otimes \eins + \eins \otimes -\tilde{\Delta}_{\hat{\sigma}}
\end{equation}
on the Hilbert space $L^2_s(\Omega)$ where $\Omega=[0,L]\times[0,L]$ and where the index $s$ refers to the fact that only functions which are symmetric across line connecting $(L,0)$ and $(0,L)$ are considered, i.e., $f(x,y)=f(L-y,L-x)$. By symmetry it is readily verified that the normal derivative of any (differentiable) function $f$ vanishes along the diagonal $y=L-x$.

If $\sigma(-\tilde{\Delta}_{\hat{\sigma}})=\{\varepsilon_n\ | \ n \in \nz_0\}$ denotes the spectrum of $-\tilde{\Delta}_{\hat{\sigma}}$ and $\{\varepsilon_n\}_{n \in \nz_0}$ are the corresponding eigenvalues in increasing order and counted with multiplicity, then
\begin{equation}
\sigma\bigl(-\tilde{\Delta}_{\hat{\sigma}}\otimes \eins + \eins \otimes -\tilde{\Delta}_{\hat{\sigma}}\bigr)=\{\varepsilon_n+\varepsilon_m\ | \ n,m \in \nz_0 \ \text{and}\ n \geq m \}\
\end{equation}
and $\{\varepsilon_n+\varepsilon_m\ | \ n,m \in \nz_0 \ \text{and}\ n \geq m \}$ are the eigenvalues of $-\tilde{\Delta}_{\hat{\sigma}}\otimes \eins + \eins \otimes -\tilde{\Delta}_{\hat{\sigma}}$, again counted with multiplicity: indeed, if $\{\varphi_n\}_{n \in \nz_0}$ are the eigenfunctions of $-\tilde{\Delta}_{\hat{\sigma}}$, each (tensor) product $\varphi_n \otimes \varphi_m$ forms an eigenfunction of $-\tilde{\Delta}_{\hat{\sigma}}\otimes \eins + \eins \otimes -\tilde{\Delta}_{\hat{\sigma}}$ to the eigenvalue $\varepsilon_n+\varepsilon_m$ if the underlying Hilbert space is $L^2(\Omega)$ (see, e.g., Ref.\citenum{GustafsonSigal}). However, since we are only working on the symmetric subspace $L^2_s(\Omega)$, we only have to consider symmetrised eigenfunctions, i.e., eigenfunctions of the form $\varphi_n \otimes \varphi_m+\varphi_m \otimes \varphi_n$. As a consequence, for each $n,m \in \nz_0$, the two pairs $\varphi_n \otimes \varphi_m$ and $\varphi_m \otimes \varphi_n$ yield only one symmetric eigenfunction and, in order to avoid an overcounting of eigenvalues, one therefore has to require $n \geq m$.

Since $\varepsilon_0=-\kappa^2$ we obtain, assuming that $\bigl(-\tilde{\Delta}_{\hat{\sigma}},\cD(-\tilde{\Delta}_{\hat{\sigma}})\bigr)$ has exactly one eigenstate at negative energy,
\begin{equation}
N_{-}(-\Delta_{\Omega_{1}}) \leq \#\{n \in \nz_0\ |\ \varepsilon_n < \kappa^2 \}\ .
\end{equation}
Furthermore, employing the results obtained in Ref.~\citenum{BL10}, we see that $\bigl(-\tilde{\Delta}_{\hat{\sigma}},\cD(-\tilde{\Delta}_{\hat{\sigma}})\bigr)$ has only one eigenstate at negative energy if and only if
\begin{equation}
\hat{\sigma} \leq \frac{2}{L}\ .
\end{equation}
Hence, summarising the conclusions from above we have established the subsequent statement.
\begin{theorem}\label{TheoremOneDimensionalEstimate} Let $-\Delta_{\sigma}$ be given with $\supp \sigma(y) \subset [0,L]$ and $\hat{\sigma}=\|\sigma(y)\|_{\infty} \leq \frac{2}{L}$. If $-\tilde{\Delta}_{\hat{\sigma}}$ denotes the one-dimensional Laplacian as constructed above with eigenvalues $\{\varepsilon_n\}_{n \in \nz_0}$, one has
\begin{equation}
N_{-}(-\Delta_{\sigma}) \leq \#\{n \in \nz_0\ |\ \varepsilon_n < |\varepsilon_0| \}\ .
\end{equation}
\end{theorem}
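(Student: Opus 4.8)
The ingredients are essentially the bracketing comparison and the reflection construction set up in the paragraphs preceding the statement; the proof consists in assembling them, the only genuinely delicate point being the reflection identification. The plan is to pass from the two--dimensional operator $-\Delta_\sigma$ to the one--dimensional $-\tilde\Delta_{\hat\sigma}$ in two steps, each of which can only increase the count $N_-$. \emph{Step 1 (bracketing onto a triangle).} Since $\supp\sigma\subset[0,L]$, Lemma~\ref{UpperEstimate} applies: the operator inequality $-\Delta_{\Omega_1}\oplus-\Delta_{\Omega_2}\le-\Delta_\sigma$, together with positivity of $-\Delta_{\Omega_2}$ and the min--max principle, yields $N_-(-\Delta_\sigma)\le N_-(-\Delta_{\Omega_1})$, where $\Omega_1$ is the triangle with vertices $(0,0),(L,0),(0,L)$ carrying Robin conditions with constant $\hat\sigma$ on its two legs and (natural) Neumann data on the hypotenuse. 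It remains to bound $N_-(-\Delta_{\Omega_1})$.

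\emph{Step 2 (unfolding the triangle to a square).} Reflecting $\Omega_1$ across the hypotenuse $\{x+y=L\}$ onto the square $\Omega=[0,L]^2$, the extension $f\mapsto\tilde f$ defined by $\tilde f(x,y):=f(L-y,L-x)$ on $\Omega\setminus\Omega_1$ is, up to the constant $\sqrt2$, an isometry of $H^1(\Omega_1)$ onto the subspace $H^1_s(\Omega)$ of functions symmetric under $(x,y)\mapsto(L-y,L-x)$; it intertwines the quadratic forms and turns the Neumann datum on the hypotenuse into the symmetry constraint. Since all four sides of $\Omega$ then carry Robin conditions with the same constant $\hat\sigma$, the corresponding form on $H^1(\Omega)$ separates, $q[g\otimes h]=\|h\|^2\,q^{(1)}_{\hat\sigma}[g]+\|g\|^2\,q^{(1)}_{\hat\sigma}[h]$ with $q^{(1)}_{\hat\sigma}[g]=\int_0^L|g'|^2\,\ud x-\hat\sigma(|g(0)|^2+|g(L)|^2)$, so the square operator is precisely $-\tilde\Delta_{\hat\sigma}\otimes\eins+\eins\otimes-\tilde\Delta_{\hat\sigma}$. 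Hence $-\Delta_{\Omega_1}$ is unitarily equivalent to the restriction of this tensor operator to $L^2_s(\Omega)$.

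\emph{Step 3 (separation of variables and counting).} Let $\{\varepsilon_n\}_{n\in\nz_0}$ and $\{\varphi_n\}_{n\in\nz_0}$ be the (simple) eigenvalues and eigenfunctions of $-\tilde\Delta_{\hat\sigma}$; each $\varphi_n$ has definite parity under $x\mapsto L-x$ because that reflection preserves $\cD(-\tilde\Delta_{\hat\sigma})$. Consequently the eigenvalues of $-\tilde\Delta_{\hat\sigma}\otimes\eins+\eins\otimes-\tilde\Delta_{\hat\sigma}$ restricted to $L^2_s(\Omega)$ are $\{\varepsilon_n+\varepsilon_m:n\ge m\}$, counted with multiplicity, the symmetric eigenfunctions being $\varphi_n\otimes\varphi_n$ and suitable combinations of $\varphi_n\otimes\varphi_m$ and $\varphi_m\otimes\varphi_n$. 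Now the hypothesis $\hat\sigma\le2/L$ forces $-\tilde\Delta_{\hat\sigma}$ to have a single negative eigenvalue $\varepsilon_0=-\kappa^2<0$ (with $\kappa$ solving \eqref{ConditionGroundState}) while $\varepsilon_n\ge0$ for $n\ge1$; therefore $\varepsilon_n+\varepsilon_m<0$ forces $m=0$, and $\varepsilon_0+\varepsilon_n<0$ is equivalent to $\varepsilon_n<-\varepsilon_0=|\varepsilon_0|$. This gives $N_-(-\Delta_{\Omega_1})\le\#\{n\in\nz_0:\varepsilon_n<|\varepsilon_0|\}$, which, combined with Step 1, proves the theorem.

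The main obstacle is Step 2: one must check carefully that the reflection across the hypotenuse really yields a unitary equivalence and not merely a comparison --- i.e.\ that $H^1(\Omega_1)$ maps \emph{onto} $H^1_s(\Omega)$ (no regularity is lost at the diagonal, since $H^1$ reflections require only matching traces), that the two corners where Robin sides meet contribute no extra boundary term (the trace estimate from the proof of Theorem~\ref{QuadraticFormClosedness} together with $\sigma\in L^\infty$ keeps all the forms well defined), and that the separated form is associated with the tensor--product operator itself and not only bounds it from below. Once these identifications are secured, the remaining steps are routine.
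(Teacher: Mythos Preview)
Your argument is correct and follows the same route as the paper: first Lemma~\ref{UpperEstimate} reduces to the triangle $\Omega_1$, then the reflection across the hypotenuse identifies $-\Delta_{\Omega_1}$ with the restriction of $-\tilde\Delta_{\hat\sigma}\otimes\eins+\eins\otimes-\tilde\Delta_{\hat\sigma}$ to $L^2_s(\Omega)$, and finally the hypothesis $\hat\sigma\le 2/L$ forces $m=0$ in every negative sum $\varepsilon_n+\varepsilon_m$. You are in fact more careful than the paper on two points: you make the unitary equivalence in Step~2 explicit rather than leaving it implicit, and you observe that each $\varphi_n$ has definite parity under $x\mapsto L-x$, which is what justifies writing the symmetric eigenfunctions as combinations of $\varphi_n\otimes\varphi_m$ and $\varphi_m\otimes\varphi_n$ under the reflection $(x,y)\mapsto(L-y,L-x)$ rather than the naive swap $(x,y)\mapsto(y,x)$.
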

It is interesting to note that, using equation~(4.4) of Ref.~\citenum{BolEnd09}, all positive eigenvalues of \linebreak $\bigl(-\tilde{\Delta}_{\hat{\sigma}},\cD(-\tilde{\Delta}_{\hat{\sigma}})\bigr)$ can be written as $\varepsilon_n=k_n^2 > 0$ where $k_n$ is a (positive) solution of
\begin{equation}\label{QCondition}
\tan(kL)=\frac{2\hat{\sigma} k}{\hat{\sigma}^2-k^2}\ .
\end{equation}
As a consequence, combining equations \eqref{ConditionGroundState} and \eqref{QCondition} and using Theorem~\ref{TheoremOneDimensionalEstimate} we arrive at the following.
\begin{cor}\label{PropositionMost} For any fixed value $L > 0$ there exists $\sigma_0 > 0$ such that $-\Delta_{\sigma}$ has at most one eigenstate at negative energy for potentials $\sigma(y)$ with $\supp \sigma(y) \subset [0,L]$ and $\|\sigma(y)\|_{\infty} \leq \sigma_0$. If, in addition, $\sigma(y)$ fulfils \eqref{EventuallyAttractive} then there exists exactly one eigenstate.
\end{cor}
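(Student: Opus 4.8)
The plan is to combine the three preceding results: Theorem~\ref{TheoremOneDimensionalEstimate} (which already incorporates Lemma~\ref{UpperEstimate}) for the upper bound on $N_{-}(-\Delta_{\sigma})$, a continuity argument for the one-dimensional operator $-\tilde{\Delta}_{\hat{\sigma}}$ to sharpen that bound to ``at most one'', and Theorem~\ref{TheoremEssentialI} together with Theorem~\ref{TheoremNeumann} for the ``exactly one'' statement. First I would restrict attention to potentials with $\hat{\sigma}:=\|\sigma(y)\|_{\infty}\leq 2/L$, so that Theorem~\ref{TheoremOneDimensionalEstimate} applies and gives $N_{-}(-\Delta_{\sigma})\leq \#\{n\in\nz_{0}\ |\ \varepsilon_{n}<|\varepsilon_{0}|\}$, where $\varepsilon_{0}=-\kappa^{2}$ with $\kappa$ the solution of \eqref{ConditionGroundState} and the positive eigenvalues $\varepsilon_{n}=k_{n}^{2}$ solve \eqref{QCondition}. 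Since $\varepsilon_{0}<0<|\varepsilon_{0}|$, the index $n=0$ is always counted, so this cardinality equals one precisely when $\varepsilon_{1}\geq|\varepsilon_{0}|$; the task is therefore to produce $\sigma_{0}\in(0,2/L]$, depending only on $L$, such that $\varepsilon_{1}>|\varepsilon_{0}|$ whenever $\hat{\sigma}\leq\sigma_{0}$.

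The key point is the behaviour of $\varepsilon_{0}(\hat{\sigma})$ and $\varepsilon_{1}(\hat{\sigma})$ as $\hat{\sigma}\to 0^{+}$. From \eqref{ConditionGroundState} one has $\kappa\to 0$, hence $|\varepsilon_{0}(\hat{\sigma})|=\kappa^{2}\to 0$ (in fact $\kappa^{2}\sim 2\hat{\sigma}/L$). On the other hand, for $\hat{\sigma}=0$ the operator $-\tilde{\Delta}_{\hat{\sigma}}$ is just the Neumann Laplacian on $[0,L]$, whose eigenvalues are $(n\pi/L)^{2}$, $n\in\nz_{0}$; since the eigenvalues of the Robin problem on a fixed interval depend continuously (and monotonically) on the boundary parameter $\hat{\sigma}$ --- which follows from the min--max principle applied to the quadratic forms associated with $-\tilde{\Delta}_{\hat{\sigma}}$, or directly from \eqref{QCondition}, whose right-hand side tends to $0$ for $k$ near $\pi/L$, forcing $k_{1}L\to\pi$ --- one obtains $\varepsilon_{1}(\hat{\sigma})\to(\pi/L)^{2}>0$. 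Consequently there exists $\sigma_{0}\in(0,2/L]$, depending only on $L$, with $|\varepsilon_{0}(\hat{\sigma})|<\varepsilon_{1}(\hat{\sigma})$ for every $\hat{\sigma}\in(0,\sigma_{0}]$; for such $\sigma$ the cardinality above is one and Theorem~\ref{TheoremOneDimensionalEstimate} yields $N_{-}(-\Delta_{\sigma})\leq 1$, which is the first assertion.

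For the second assertion, assume in addition \eqref{EventuallyAttractive}. Since $\supp\sigma(y)\subset[0,L]$ we have $\sigma(y)=0\leq 0$ for a.e.\ $y\geq L$, so condition $(i)$ of Theorem~\ref{TheoremNeumann} holds and $\sigma_{ess}(-\Delta_{\sigma})=[0,\infty)$, i.e.\ $\inf\sigma_{ess}(-\Delta_{\sigma})=0$. Moreover a bounded function with compact support lies in $L^{1}(\rz_{+})$, so all hypotheses of Theorem~\ref{TheoremEssentialI} are satisfied and $-\Delta_{\sigma}$ has at least one eigenstate below the essential spectrum. Combined with $N_{-}(-\Delta_{\sigma})\leq 1$ this forces exactly one such eigenstate.

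I expect the only genuine obstacle to be making $\varepsilon_{1}(\hat{\sigma})\to(\pi/L)^{2}$ rigorous uniformly enough that $\sigma_{0}$ depends on $L$ alone --- in particular ruling out that some positive eigenvalue $\varepsilon_{n}$ of $-\tilde{\Delta}_{\hat{\sigma}}$ slips below $\kappa^{2}$ as $\hat{\sigma}$ ranges over $(0,\sigma_{0}]$. This can be handled by the standard continuity and monotonicity of Robin eigenvalues in the boundary parameter, or, if a fully self-contained argument is preferred, by an elementary estimate on \eqref{ConditionGroundState} and \eqref{QCondition}: for instance $\kappa^{2}\leq C(L)\,\hat{\sigma}$ using that $\tanh(t)/t$ is bounded below on bounded intervals, together with a crude lower bound such as $\varepsilon_{1}\geq(\pi/2L)^{2}$ extracted from \eqref{QCondition} for $\hat{\sigma}$ small. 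Everything else is routine bookkeeping.
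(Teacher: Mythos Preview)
Your proposal is correct and follows exactly the route the paper indicates: the corollary is stated as a direct consequence of combining equations \eqref{ConditionGroundState} and \eqref{QCondition} with Theorem~\ref{TheoremOneDimensionalEstimate}, and you have spelled out precisely that argument, together with the appeal to Theorems~\ref{TheoremNeumann} and \ref{TheoremEssentialI} for the ``exactly one'' part. The only additions you make --- the limits $|\varepsilon_0(\hat\sigma)|\to 0$ and $\varepsilon_1(\hat\sigma)\to(\pi/L)^2$ and the monotonicity of the Robin eigenvalues in $\hat\sigma$ to make the choice of $\sigma_0$ uniform --- are the natural details one needs to fill in, and they are all sound.
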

%
%
\subsection{On the ground state energy}
In this final subsection we provide upper bounds on the bottom of the spectrum. This result then provides us, in particular, with an estimate of the lowest eigenvalue below the essential spectrum in the case where such an eigenvalue exists.
\begin{theorem}\label{EstimateGroundStateEnergy} Assume $\sigma(y) \in L^{\infty}(\rz_+)$ and set $\hat{\sigma}:=\|\sigma(y)\|_{\infty}$. If $E_{\sigma}=\inf \sigma(-\Delta_{\sigma})$ denotes the ground state energy then
\begin{equation}
-2\hat{\sigma}^2\leq E_{\sigma} \leq -2\hat{\sigma}^2+8\hat{\sigma}^2\int_{0}^{\infty}[\hat{\sigma}-\sigma(y)]e^{-2\hat{\sigma}y}\ \ud y\ .
\end{equation}
\end{theorem}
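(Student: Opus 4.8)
The plan is to prove the two inequalities separately, in both cases comparing $-\Delta_\sigma$ with the constant-coefficient operator $-\Delta_{\hat\sigma}$ whose spectrum and ground state are recorded in Remark~\ref{RemarkConstantSigma}.

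For the lower bound I would observe that $\sigma(y)\le\hat\sigma:=\|\sigma(y)\|_\infty$ for a.e.\ $y$, so that $-\sigma(y)|\varphi_{bv}(y)|^2\ge-\hat\sigma|\varphi_{bv}(y)|^2$ pointwise and hence $q[\varphi]\ge q_{\hat\sigma}[\varphi]$ for every $\varphi$ in the common form domain $H^1(\rz_+^2)$, where $q_{\hat\sigma}$ is the form of $-\Delta_{\hat\sigma}$. By the minimax principle this yields $E_\sigma=\inf\sigma(-\Delta_\sigma)\ge\inf\sigma(-\Delta_{\hat\sigma})$. Using the tensor decomposition \eqref{TensorDecomposition} together with the fact that the one-dimensional operator $-\Delta^{(1)}_{\hat\sigma}$ has bottom of spectrum $-\hat\sigma^2$, one gets $\inf\sigma(-\Delta_{\hat\sigma})=-2\hat\sigma^2$, which is the asserted lower bound. (If $\hat\sigma=0$ then $-\Delta_\sigma$ is the Neumann Laplacian and $E_\sigma=0=-2\hat\sigma^2$, so one may assume $\hat\sigma>0$ for the rest.)

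For the upper bound I would apply the variational estimate $E_\sigma\le q[\varphi]$ to the trial function $\varphi(x,y)=2\hat\sigma\,\ue^{-\hat\sigma(x+y)}$, i.e.\ the normalised ground state of the constant-$\hat\sigma$ problem of Remark~\ref{RemarkConstantSigma}; it lies in $H^1(\rz_+^2)$ and satisfies $\|\varphi\|_{L^2(\rz_+^2)}=1$. The clean way to evaluate $q[\varphi]$ is to use that $\varphi$ is an eigenfunction of $-\Delta_{\hat\sigma}$ at energy $-2\hat\sigma^2$, so that $\int_{\rz_+^2}|\nabla\varphi|^2\,\ud x-\hat\sigma\int_{\partial\rz_+^2}|\varphi_{bv}(y)|^2\,\ud y=q_{\hat\sigma}[\varphi]=-2\hat\sigma^2$; solving for the Dirichlet integral and inserting it into $q[\varphi]=\int_{\rz_+^2}|\nabla\varphi|^2\,\ud x-\int_{\partial\rz_+^2}\sigma(y)|\varphi_{bv}(y)|^2\,\ud y$ gives $q[\varphi]=-2\hat\sigma^2+\int_{\partial\rz_+^2}[\hat\sigma-\sigma(y)]|\varphi_{bv}(y)|^2\,\ud y$. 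Since the two boundary half-lines both carry the boundary value $2\hat\sigma\,\ue^{-\hat\sigma y}$, the remaining integral equals $2\cdot 4\hat\sigma^2\int_0^\infty[\hat\sigma-\sigma(y)]\ue^{-2\hat\sigma y}\,\ud y$, which is exactly the claimed bound.

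There is no genuine obstacle here; the points that need care are (a) verifying membership of the trial function in $H^1(\rz_+^2)$ and its $L^2$-normalisation, (b) keeping track of the convention in the boundary term of \eqref{DefinitionQF} so that both segments of $\partial\rz_+^2$ contribute equally in the variable $y$ (which uses the symmetry $\varphi(x,y)=\varphi(y,x)$ of the chosen trial function), and (c) invoking the explicit spectral data for $-\Delta_{\hat\sigma}$ from Remark~\ref{RemarkConstantSigma}. Finiteness of the integral in the upper bound is automatic, since $0\le\hat\sigma-\sigma(y)\le 2\hat\sigma$ and $\ue^{-2\hat\sigma y}\in L^1(\rz_+)$.
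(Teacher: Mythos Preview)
Your proof is correct and follows essentially the same approach as the paper: the lower bound via the form inequality $q[\varphi]\ge q_{\hat\sigma}[\varphi]$ (equivalently, operator bracketing against $-\Delta_{\hat\sigma}$), and the upper bound by plugging in the normalised ground state $\varphi(x,y)=2\hat\sigma\,e^{-\hat\sigma(x+y)}$ of the constant-$\hat\sigma$ model. The only cosmetic difference is that the paper computes $\int|\nabla\varphi|^2=2\hat\sigma^2$ directly, whereas you read off $q_{\hat\sigma}[\varphi]=-2\hat\sigma^2$ and then add back $\int_{\partial\rz_+^2}[\hat\sigma-\sigma(y)]|\varphi_{bv}|^2\,\ud y$; both routes lead to the same expression.
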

\begin{proof} The lower bound follows from a comparison argument: For this consider the operator $-\Delta_{\hat{\sigma}}$ with $\sigma(y):=\hat{\sigma}$ as described in Remark~\ref{RemarkConstantSigma}. We readily see that this operator is indeed smaller (in the sense of an operator bracktering) than the given operator $-\Delta_{\sigma}$ with $\sigma(y)$. Furthermore, since the lowest spectral value of $-\Delta_{\hat{\sigma}}$ is the eigenvalue $-2\hat{\sigma}^2$, the lower bound follows directly.
	
To obtain the upper bound we use the (normalised) test-function $\varphi(x,y)=2\hat{\sigma}e^{-\hat{\sigma}(x+y)}$, i.e., the ground state of the example discussed in Remark~\ref{RemarkConstantSigma}. We calculate
\begin{equation}\begin{split}
q[\varphi]&=2\hat{\sigma}^2-8\hat{\sigma}^2\int_{0}^{\infty}\sigma(y)e^{-2\hat{\sigma}y}\ \ud y \\
&=-2\hat{\sigma}^2+8\hat{\sigma}^2\int_{0}^{\infty}[\hat{\sigma}-\sigma(y)]e^{-2\hat{\sigma}y}\ \ud y \ .
\end{split}
\end{equation}
This proves the claim.
\end{proof}
We can illustrate the result of Theorem~\ref{EstimateGroundStateEnergy} by the following example: Consider the step-potential
\begin{equation}
\sigma(y):=
\begin{cases}
\sigma & \text{if} \ \ 0 \leq y \leq L \\
   0   & \text{if} \ \ y > L
\end{cases}
\end{equation}
where $\sigma > 0$ is the depth of the potential and $L > 0$ the range of it. In this case we know that the essential spectrum starts at zero and that there exists a negative eigenvalue (see Theorem ~\ref{TheoremNeumann} and Theorem~\ref{TheoremEssentialI}). Consequently, Theorem~\ref{EstimateGroundStateEnergy} implies
\begin{equation}
-2\sigma^2\leq E_{\sigma} \leq -2\sigma^2+4\sigma^2 e^{-2\sigma L}\ ,
\end{equation}
leaving us with an estimate of the lowest eigenvalue of the system.
\section{Asymptotics of the ground state}
\label{Sec3}
In this final section we will discuss certain properties of the ground state $\varphi_0(x)$ (where $x \in \rz^2_+$) of the system. Note that $\varphi_0 \in H^1(\rz^2_+)$ with $\|\varphi_0\|_{L^2(\rz^2_+)}=1$ is called ground state of the system if 
\begin{equation}
q[\varphi_0]=E_{\sigma}=\inf \sigma(-\Delta_{\sigma}) > -\infty\ .
\end{equation}
In particular, we want to describe its asymptotic behavior as $|x| \to \infty$ in the case where the boundary potential $\sigma(y)$ has compact support. 

Adopting standard methods available from the theory of Schr\"{o}dinger operators and, in particular, the methods used in the proof of Theorem~11.8 in Ref.~\citenum{LiebLoss}, we directly arrive at the following result.
\begin{lemma} Assume that $\varphi_0(x)$ is a ground state of $-\Delta_{\sigma}$. Then $\varphi_0(x)$ is unique (up to a constant phase) and can be chosen to be real-valued and strictly positive in $\rz^2_+$. Furthermore, $\varphi_0(x)$ fulfils the eigenvalue equation $-\Delta_{\sigma} \varphi_0=E_{\sigma} \varphi_0$.
\end{lemma}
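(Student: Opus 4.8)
The plan is to run the classical Perron--Frobenius argument for ground states of Schr\"{o}dinger-type operators, as in the proof of Theorem~11.8 of Ref.~\citenum{LiebLoss}, adapted to the Robin form \eqref{DefinitionQF}. The starting observation is that $q[\cdot]$ does not increase under the operations $\varphi \mapsto |\varphi|$ and $\varphi \mapsto \mathrm{Re}\,\varphi,\ \mathrm{Im}\,\varphi$. Indeed, for $\varphi \in H^1(\rz^2_+)$ one has $|\varphi| \in H^1(\rz^2_+)$ with $\bigl|\nabla|\varphi|\bigr| \le |\nabla\varphi|$ a.e., and, since the trace map is continuous and commutes with $|\cdot|$ on the smooth functions dense in $H^1(\rz^2_+)$, the boundary values satisfy $(|\varphi|)_{bv}=|\varphi_{bv}|$ a.e.; hence $q[|\varphi|] \le q[\varphi]$ while $\||\varphi|\|_{L^2}=\|\varphi\|_{L^2}$. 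Likewise $q[\varphi]=q[\mathrm{Re}\,\varphi]+q[\mathrm{Im}\,\varphi]$ and $\|\varphi\|^2=\|\mathrm{Re}\,\varphi\|^2+\|\mathrm{Im}\,\varphi\|^2$, so by the variational bound $q[\psi]\ge E_\sigma\|\psi\|^2$ every nonzero real or imaginary part of a ground state is again a ground state. Consequently any ground state may be replaced by a real, nonnegative one, and it suffices to analyze those.

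Next I would extract the eigenvalue equation and interior regularity. A minimizer $\varphi_0$ of the Rayleigh quotient $q[\varphi]/\|\varphi\|^2$ over $H^1(\rz^2_+)\setminus\{0\}$ satisfies, upon differentiating $t \mapsto q[\varphi_0+tu]/\|\varphi_0+tu\|^2$ at $t=0$ and using both $u$ and $\ui u$ as variations, $s(\varphi_0,u)=E_\sigma\langle\varphi_0,u\rangle_{L^2}$ for all $u\in H^1(\rz^2_+)$; by the representation theorem for quadratic forms used in Section~\ref{Sec1} this is precisely the assertion that $\varphi_0 \in \cD(-\Delta_\sigma)$ and $-\Delta_\sigma\varphi_0=E_\sigma\varphi_0$. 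Restricting the test functions to $u \in C^\infty_0(\rz^2_+)$ gives $-\Delta\varphi_0 = E_\sigma\varphi_0$ distributionally on the open quadrant, so interior elliptic regularity yields $\varphi_0 \in C^\infty(\rz^2_+)$. Taking $\varphi_0 \ge 0$ as arranged above, and noting that $E_\sigma$ is a bounded constant coefficient, Harnack's inequality (equivalently the strong maximum principle) forbids a nonnegative solution from vanishing at an interior point unless it vanishes identically; since $\|\varphi_0\|_{L^2}=1$ this forces $\varphi_0 > 0$ everywhere in $\rz^2_+$.

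For uniqueness, suppose $\dim\ker(-\Delta_\sigma-E_\sigma)\ge 2$. By the first paragraph I can choose an orthonormal pair of real ground states $\varphi_0,\psi_0$ with $\varphi_0 > 0$ on $\rz^2_+$. Since $\langle\varphi_0,\psi_0\rangle_{L^2}=0$ while $\varphi_0 > 0$, the continuous function $\psi_0$ must take both signs and hence vanish at some interior point; but $|\psi_0|$ is again a nonnegative ground state, hence by the previous paragraph a nonnegative solution of the eigenvalue equation with an interior zero, so $|\psi_0|\equiv 0$, contradicting $\|\psi_0\|_{L^2}=1$. Thus the ground state eigenspace is one-dimensional, and normalizing and fixing the phase so that $\varphi_0 > 0$ determines it uniquely. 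The only points requiring a little care are the two $H^1$ facts $\bigl|\nabla|\varphi|\bigr|\le|\nabla\varphi|$ and $(|\varphi|)_{bv}=|\varphi_{bv}|$ --- needed here because the form carries a boundary term rather than a potential term --- together with the observation that strict positivity is obtained, and claimed, only on the open region $\rz^2_+$, where the interior Harnack inequality is exactly what is available; the remainder is a routine adaptation of the standard argument.
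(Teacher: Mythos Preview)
Your proposal is correct and follows precisely the approach the paper invokes: the paper itself gives no detailed argument but merely points to Theorem~11.8 of Ref.~\citenum{LiebLoss}, and what you have written is a careful adaptation of that Perron--Frobenius/Harnack argument to the Robin form~\eqref{DefinitionQF}, including the necessary observation that the boundary term is preserved under $\varphi\mapsto|\varphi|$ via $(|\varphi|)_{bv}=|\varphi_{bv}|$. In effect you have supplied the details the paper omits, with the same strategy.
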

As a final result we derive (pointwise) upper bounds on $\varphi_0(x)$ that describe its asymptotic behavior as $|x| \to \infty$ assuming, for simplicity, that $\sigma(y)$ has compact support. The key ingredient is an approach based on the concept of (positive) supersolutions as described in Ref.~\citenum{Agmon}. More precisely, if $E_{\sigma} < 0$, the supersolution in our case will be the function
\begin{equation}
u(x)=\frac{e^{-\sqrt{|E_{\sigma}|}|x|}}{\sqrt{|x|}}
\end{equation}
considered on $\Omega_R:=\{x \in \rz^2 \ |\ |x| > R\}$ for some $R > 0$ large enough. It is called a supersolution since it fulfils $\bigl(-\Delta+|E_{\sigma}|\bigr)u(x)\geq 0$ on $\Omega_R$. In the proof we will also use the fact that the eigenfunction $\varphi_0$ can be extended (by reflection across the coordinate axes) to all of $\Omega_R$ and assumed to be continuous. Indeed, for $R$ large enough, the boundary conditions on $\partial \rz^2_+ \cap \Omega_R$ are Neumann boundary conditions only (here enters the assumption of compact support). Hence, standard regularity theory (see, e.g., Ref.~\citenum{GilTru83}) implies that $\varphi_0|_{\partial \rz^2_+ \cap \Omega_R}$ is in $H^2(\rz^2_+ \cap \Omega_R)$. After having extended $\varphi_0$ to $\Omega_R$ by reflection and then, using a cut-off function, to all of $\rz^2$, standard Sobolev embedding theorems imply continuity (see, e.g., Ref.~\citenum{SigalHislop}). Note that extending $\varphi_0$ to all of $\rz^2$ while remaining in $H^2(\rz^2 \cap \Omega_R)$ is possible due to the Neumann boundary conditions (vanishing normal derivative).
\begin{theorem}\label{TheoremAsymptoticsGroundState} Assume that $\supp \sigma(y) \subset [0,L]$ for some $L >0$. If $\varphi_0$ is the  ground state of $-\Delta_{\sigma}$ with ground state energy $E_{\sigma} < 0$ then, for each $R > L$, there exists a constant $c>0$ such that 
\begin{equation}
|\varphi_0(x)|\leq c\ \frac{e^{-\sqrt{|E_{\sigma}|}|x|}}{\sqrt{|x|}}\ , \qquad \forall x \in \rz^2_+: |x| > R+1\ .
\end{equation}
\end{theorem}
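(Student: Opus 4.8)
The plan is to use the comparison-principle (Agmon-type) argument already set up in the paragraph preceding the theorem. Fix $R > L$, so that on $\Omega_R = \{x \in \rz^2 : |x| > R\}$ the boundary potential $\sigma$ vanishes and the boundary conditions on $\partial \rz^2_+ \cap \Omega_R$ are pure Neumann conditions. As noted in the excerpt, this allows one to reflect $\varphi_0$ across the two coordinate half-axes and, after a cut-off, to regard the extended function (still denoted $\varphi_0$) as a continuous $H^2$-function on $\Omega_R \subset \rz^2$ satisfying $-\Delta \varphi_0 = E_\sigma \varphi_0$ there, i.e.\ $(-\Delta + |E_\sigma|)\varphi_0 = 0$ on $\Omega_R$ (here we use $E_\sigma < 0$). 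On the other hand, $u(x) = e^{-\sqrt{|E_\sigma|}|x|}/\sqrt{|x|}$ is a positive supersolution: $(-\Delta + |E_\sigma|)u \ge 0$ on $\Omega_R$ for $R$ large, which is a direct computation using $\Delta = \partial_r^2 + r^{-1}\partial_r$ in polar coordinates (the radial part of $u$ is $e^{-\sqrt{|E_\sigma|}r} r^{-1/2}$, and one checks the sign of the residual, which is of order $r^{-5/2}$ times a positive constant).

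Next I would carry out the actual comparison. Since $\varphi_0$ is continuous and $u$ is continuous and strictly positive on the compact set $\{|x| = R+1\}$, there is a constant $c > 0$ with $|\varphi_0(x)| \le c\, u(x)$ on $\{|x| = R+1\}$. Consider the function $w := c\,u - \varphi_0$ on the region $\Omega_{R+1} = \{|x| > R+1\}$. It satisfies $(-\Delta + |E_\sigma|)w \ge 0$ there, $w \ge 0$ on the inner boundary $\{|x| = R+1\}$ by the choice of $c$, and $w \to 0$ at infinity because both $\varphi_0 \in L^2$ (so that, combined with elliptic regularity and a standard subsolution/mean-value estimate, $\varphi_0(x) \to 0$) and $u(x) \to 0$. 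Then the maximum principle for the operator $-\Delta + |E_\sigma|$ (which has positive zeroth-order coefficient, so no interior minimum of a supersolution below its boundary infimum) forces $w \ge 0$ on all of $\Omega_{R+1}$, i.e.\ $\varphi_0(x) \le c\, u(x)$. Applying the same argument to $-\varphi_0$ (equivalently, using $|\varphi_0|$, which is also a solution in the relevant sense since $\varphi_0$ may be taken real-valued) gives $|\varphi_0(x)| \le c\, u(x)$ for all $x$ with $|x| > R+1$; restricting back to $\rz^2_+$ yields the claimed bound.

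The main obstacle is making the "decay at infinity" of $\varphi_0$ rigorous enough to feed into the maximum principle on the \emph{unbounded} region $\Omega_{R+1}$: one cannot simply invoke the maximum principle on a noncompact domain without controlling behavior at infinity. I would handle this by first establishing that $\varphi_0(x) \to 0$ as $|x| \to \infty$ — this follows from $\varphi_0 \in L^2(\rz^2_+)$ together with the local elliptic estimate $|\varphi_0(x)| \le C \|\varphi_0\|_{L^2(B_1(x))}$, valid uniformly for $|x|$ large since $\varphi_0$ solves $-\Delta\varphi_0 = E_\sigma\varphi_0$ with constant coefficients there (this is exactly the kind of estimate used in the proof of Theorem~11.8 of Ref.~\citenum{LiebLoss}). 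Then, given $\eta > 0$, on the annulus $\{R+1 < |x| < \rho\}$ with $\rho$ chosen so large that $|\varphi_0| < \eta$ and $c\,u < $ (anything) on $\{|x| = \rho\}$ — more carefully, so that $w > -\eta$ there — the maximum principle on this \emph{bounded} annulus gives $w \ge -\eta$ throughout; letting $\rho \to \infty$ and then $\eta \to 0$ yields $w \ge 0$ on $\Omega_{R+1}$. The remaining steps — the supersolution inequality for $u$ and the extension/regularity of $\varphi_0$ — are routine and already sketched in the text preceding the theorem.
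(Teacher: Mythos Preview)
Your overall route---the strong maximum principle for $-\Delta+|E_\sigma|$ on bounded annuli $\{R+1<|x|<\rho\}$ followed by $\rho\to\infty$, $\eta\to 0$---is different from the paper's, which follows Agmon's variational method: it works with the positive part $u_0=[\varphi_0-cu]_+$, shows that $u_0$ is a weak nonnegative subsolution, and then uses the integral identity $\nabla u\cdot\nabla(\zeta^2u_0^2/u)=|\nabla(\zeta u_0)|^2-u^2|\nabla(\zeta u_0/u)|^2$ together with a cutoff sequence $\chi_n$ and Fatou's lemma to force $u_0/u$ to be constant, hence zero. Both strategies are standard, and your handling of the unbounded domain via exhaustion by annuli is perfectly reasonable.

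There is, however, a genuine gap: your assertion that $u(x)=|x|^{-1/2}e^{-\sqrt{|E_\sigma|}\,|x|}$ is a supersolution is false. With $\kappa=\sqrt{|E_\sigma|}$ and $r=|x|$, the polar computation gives
\[
(-\Delta+\kappa^2)u=-\Bigl(u''+\tfrac{1}{r}u'\Bigr)+\kappa^2 u=-\tfrac{1}{4r^{2}}\,u<0,
\]
so the residual of order $r^{-5/2}$ carries the constant $-\tfrac14$, not a positive one; $u$ is a \emph{sub}solution. Then $w=cu-\varphi_0$ satisfies $(-\Delta+\kappa^2)w<0$, and the minimum principle you invoke on the annulus (which needs $(-\Delta+\kappa^2)w\ge 0$ to conclude $w\ge\min_{\partial}w$) is simply not available. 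The repair is easy: replace $u$ by the modified Bessel function $K_0(\kappa|x|)$, which is a positive exact solution of $(-\Delta+\kappa^2)K_0=0$ on $\Omega_R$ with the same asymptotics $K_0(\kappa r)\sim\sqrt{\pi/(2\kappa r)}\,e^{-\kappa r}$; your annulus-plus-limit argument then goes through verbatim and yields the stated bound. (The paper's preamble makes the same supersolution assertion about $u$, but its proof is explicitly only a sketch that defers to Agmon's Theorems~2.7 and~3.2 for the rigorous details.)
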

\begin{proof} We only sketch the proof. For more details see Theorem~2.7 and Theorem~3.2 in Ref.~\citenum{Agmon}. As described above, we will assume $\varphi_0(x)$ to be extended to and continuous on all of $\Omega_R$. Now, picking $R>L$ and defining $R_0:=R+1$ we choose $c > 0$ such that
\begin{equation}\label{EqProof1}
cu(x)-\varphi_0(x) > 0, \quad \forall x: |x|=R_0\ .
\end{equation}
Note that \eqref{EqProof1} is meaningful due to continuity of both functions $u(x)$ and $\varphi_0(x)$ on $\{x \ | \ |x|=R_0\}$. The key idea now is to prove that the function
\begin{equation}
u_0(x):=\bigl[\varphi_0(x)-cu(x)\bigr]_+
\end{equation}
vanishes on all of $\Omega_{R_0}$. Note that $[\cdot]_{+}$ denotes the positive part of a function.

From continuity we conclude that there exists $\delta > 0$ such that $u_0(x)=0$ for all $x$ s.t. $R_0 \leq |x| < R_0 + \delta$. Furthermore, we note that $\varphi_0(x)-cu(x)$ is a subsolution, i.e., $(-\Delta+|E_{\sigma}|)\bigl(\varphi_0(x)-cu(x)\bigr) \leq 0$ on $\Omega_{R_0}$. Lemma~2.9 of Ref.~\citenum{Agmon} then shows that $u_0(x)$ is also a non-negative subsolution (in the weak sense). Hence,
\begin{equation}
\int_{\Omega_{R_0}} \nabla u_0 \nabla (\zeta^2 u_0) \ \ud x+|E_{\sigma}|\int_{\Omega_{R_0}} \zeta^2 u^2_0 \ \ud x \leq 0\ ,
\end{equation}
for any real function $\zeta \in C^{\infty}_0(\Omega_{R_0})$. Using the identity (see equation~(2.25) in Ref.~\citenum{Agmon})
\begin{equation}
\nabla u_0 \nabla (\zeta^2 u_0)=|\nabla(\zeta u_0)|^2-u^2_0|\nabla \zeta|^2
\end{equation}
we obtain
\begin{equation}\label{AgmonProofEstimate}
\int_{\Omega_{R_0}}\Bigl(|\nabla(\zeta u_0)|^2+|E_{\sigma}|\zeta^2 u^2_0\Bigr) \ \ud x \leq \int_{\Omega_{R_0}} u^2_0|\nabla \zeta|^2 \ \ud x\ .
\end{equation}
Now, using \eqref{AgmonProofEstimate} and the identity (see equation~(2.6) in Ref.~\citenum{Agmon})
\begin{equation}
\nabla u \nabla \left(\frac{\zeta^2 u^2_0}{u}\right) = \left|\nabla (\zeta u_0)\right|^2-u^2\left|\nabla \left(\frac{\zeta u_0}{u}\right)\right|^2\ ,
\end{equation}
we obtain
\begin{equation}\label{EstimateTestFunctionShmuel}
\int_{\Omega_{R_0}} u^2\left|\nabla \Bigl(\frac{\zeta u_0}{u}\Bigr)\right|^2 \ \ud x \leq \int_{\Omega_{R_0}} u^2_0|\nabla \zeta|^2 \ \ud x\ ,
\end{equation}
in analogy to equation~(2.28) in Ref.~\citenum{Agmon}.

We then choose a suitable sequence $(\chi_n)_{n \in \nz}$ with $\chi_n(x):=\chi(\frac{x}{n})$, $\chi \in C^{\infty}_0(\rz^2)$, $0 \leq \chi \leq 1$ and $\chi(x)=1$ for $|x|\leq 1$. Employing the Lemma of Fatou as well as the estimate \eqref{EstimateTestFunctionShmuel} we obtain
\begin{equation}\begin{split}
\int_{\Omega_{R_0}} u^2\left|\nabla \Bigl(\frac{u_0}{u}\Bigr)\right|^2 \ \ud x &\leq \liminf_{n \to \infty}\int_{\Omega_{R_0}} u^2_0|\nabla \chi_n|^2 \ \ud x \\
&=0 \ .
\end{split}
\end{equation}
As a consequence, $u_0(x)=\lambda u(x)$ with some constant $\lambda$. Since $u > 0$ in $\Omega_{R_0}$ while $u_0$ vanishes in some strip around $R_0$ it follows that $\lambda=0$. Thus $u_0(x)=0$ and the theorem is proved.
\end{proof}
\begin{rem}\label{FinalRemark} Theorem~\ref{TheoremAsymptoticsGroundState} establishes the asymptotics of the ground state wave function $\varphi_0(x)$ as $|x| \rightarrow \infty$, assuming the boundary potential $\sigma(y)$ has compact support. In order to understand how the compactness of the support affects the asymptotics, it is instructive to look at the ground state of the example discussed in Remark~\ref{RemarkConstantSigma}, i.e., the case of $\sigma(y)=\sigma > 0$. In this case the ground state is given by $\varphi(x,y)=2\sigma e^{-\sigma(x+y)}$. Setting $x=y$ for simplicity, one has $|\varphi(x,x)|=2\sigma e^{-2\sigma x}=2\sigma e^{-\sqrt{|E_{\sigma}|} |x|}$, taking into account that $|x|=\sqrt{2}x$ and that $E_{\sigma}=-2\sigma^2$ is the corresponding ground state energy. We hence conclude that the finite range of the boundary potential leads, in the direction where $x=y$, to an additional decay factor of $1/\sqrt{|x|}$.
\end{rem}
%
%
\begin{acknowledgments}
JK would like to thank J.~Bolte and S.~Egger for helpful discussions. Furthermore, we would like to thank K.~Pankrashkin and the referee for valuable comments which led to an improvement of the paper.
\end{acknowledgments}
%
\def\cprime{$'$} \def\polhk#1{\setbox0=\hbox{#1}{\ooalign{\hidewidth
  \lower1.5ex\hbox{`}\hidewidth\crcr\unhbox0}}}

\end{document}